\algnewcommand{\algorithmicand}{\textbf{ and }}
\algnewcommand{\algorithmicor}{\textbf{ or }}
\algnewcommand{\algorithmicnot}{\textbf{ not }}
\algnewcommand{\algorithmicfalse}{\textbf{ false }}
\algnewcommand{\algorithmictrue}{\textbf{ true }}
\algnewcommand{\algorithmicbreak}{\textbf{break}}
\algnewcommand{\OR}{\algorithmicor}
\algnewcommand{\AND}{\algorithmicand}
\algnewcommand{\NOT}{\algorithmicnot}
\algnewcommand{\FALSE}{\algorithmicfalse}
\algnewcommand{\TRUE}{\algorithmictrue}
\algnewcommand{\Break}{\algorithmicbreak}
\newcommand\bSI[1]{{\small\SI{}{#1}}}
\newlength\unitwdth
\newlength\numwdth
\newlength\tdima
\newcommand\SIdescr[3]{%
	\setlength\tdima{\linewidth}%
	\addtolength\tdima{\@totalleftmargin}%
	\addtolength\tdima{-\dimen\@curtab}%
	\addtolength\tdima{-\unitwdth}%
	\addtolength\tdima{-\numwdth}%
	\parbox[t]{\tdima}{%
		#1
		\leaders\hbox{$\m@th\mkern \@dotsep mu\hbox{\tiny}\mkern \@dotsep mu$}%
		\hfill
		\ifhmode\strut\fi
		\makebox[0pt][l]{%
			\makebox[\unitwdth][l]{\quad\bSI{#2}}%
			\makebox[\numwdth][l]{\quad #3}}}}
\definecolor{codegreen}{rgb}{0,0.6,0}
\definecolor{codegray}{rgb}{0.5,0.5,0.5}
\definecolor{codepurple}{rgb}{0.58,0,0.82}
\definecolor{backcolour}{rgb}{0.95,0.95,0.92}
\lstdefinestyle{mystyle}{
	backgroundcolor=\color{backcolour},   
	commentstyle=\color{codegreen},
	keywordstyle=\color{magenta},
	numberstyle=\tiny\color{codegray},
	stringstyle=\color{codepurple},
	basicstyle=\ttfamily\footnotesize,
	breakatwhitespace=false,         
	breaklines=true,                 
	captionpos=b,                    
	keepspaces=true,                 
	numbers=left,                    
	numbersep=5pt,                  
	showspaces=false,                
	showstringspaces=false,
	showtabs=false,                  
	tabsize=2
}
\newenvironment{example}[1][Example.]{\begin{trivlist}\item[\hskip \labelsep {\bfseries #1}]}{\end{trivlist}}
\newcommand{\compresslist}{ 
	\setlength{\itemsep}{1pt}
	\setlength{\parskip}{0pt}
	\setlength{\parsep}{0pt}
}
\algnewcommand\algorithmicclass{\textbf{class}}
\algnewcommand\algorithmicenum{\textbf{enum}}
\algnewcommand\algorithmicwhen{\textbf{when}}
\algnewcommand\algorithmicconstructor{\textbf{constructor}}
\tikzstyle{decision} = [diamond, draw, fill=gray!20, 
\tikzstyle{block} = [rectangle, draw, fill=blue!20, text width=5em, text centered, rounded corners, minimum height=1.5em]
\tikzstyle{line} = [draw, -latex']
\tikzstyle{cloud} = [draw, ellipse,fill=red!20, node distance=3cm, minimum height=2em]
\tikzstyle{vertex}=[circle, draw, inner sep=0pt, minimum size=0pt]
\theoremstyle{plain}
\newtheorem{thm}{{Theorem}}[section]
\newtheorem{lem}[thm]{Lemma}
\newtheorem{defn}[thm]{Definition}
\def\BibTeX{{\rm B\kern-.05em{\sc i\kern-.025em b}\kern-.08em
    T\kern-.1667em\lower.7ex\hbox{E}\kern-.125emX}}
\begin{document}

\title{
	Distributed Identification of Central Nodes with Less Communication
\thanks{JFM thanks the AIMS Research Centre and the DST-CSIR HCD Inter-Bursary
	scheme for PhD funding.}
}

\author{\IEEEauthorblockN{1\textsuperscript{st} Jordan F.~Masakuna}
\IEEEauthorblockA{\textit{Computer Science Division} \\
\textit{Stellenbosch University}, Stellenbosch\\
\textit{Impact Radius}, Cape Town\\
South Africa \\
jordan@aims.ac.za}
\and
\IEEEauthorblockN{2\textsuperscript{nd} Steve Kroon}
\IEEEauthorblockA{\textit{Computer Science Division} \\
	\textit{Stellenbosch University}, Stellenbosch\\
	South Africa \\
	kroon@sun.ac.za}
}

\maketitle

\begin{abstract}

This  paper  is  concerned  with  distributed detection of central nodes in complex networks using closeness centrality. 
Closeness centrality plays an essential role in network analysis.  Evaluating closeness centrality exactly requires complete knowledge of the network; for large networks, this may be inefficient, so closeness centrality should be approximated. Distributed tasks such as leader election can make effective use of centrality information for highly central nodes, but complete network information is not locally available. 
This paper refines a distributed centrality computation algorithm by You et al. \cite{you2017distributed} by pruning nodes which are almost certainly not most central. For example, in a large network, leave nodes can not play a central role. 
This leads to a reduction in the number of messages exchanged to determine the centrality of the remaining nodes. 
Our results show that our approach reduces the number of messages for networks which contain many prunable nodes. Our results also show that reducing the number of messages may have a positive impact on running time and memory size.
\end{abstract}

\begin{IEEEkeywords}
Distributed systems, network analysis, closeness centrality, leader election.
\end{IEEEkeywords}

\section{Introduction}\label{sec:introduction}

Centrality metrics play an essential role in network analysis \cite{lam1979congestion}. For some types of centrality metrics such as betweenness centrality, evaluating network centrality exactly requires complete
knowledge of the network; for large networks, this may be too costly computationally, so approximate methods (i.e. methods for building a view of a network to compute centrality) have been proposed. Centrality information for highly central nodes can be used effectively for distributed tasks such as leader election, but distributed nodes do not have complete network information. 
The relative centrality of nodes
is important in problems such as selection of informative
nodes, for example, in active sensing \cite{nelson2006sensory}\textemdash i.e. the propagation time required to synchronize the nodes of a network can be minimized if the most central node is
known \cite{ramirez1979distributed, kim2013leader}. Here we consider closeness centrality\textemdash We wish to detect
most central nodes in the network. 

Not all nodes ultimately need to build a view of the network formed in their interaction since some nodes may realize quickly that they are not suitably central, i.e. they have small closeness centralities. The question is how to identify such insignificant nodes in a decentralised algorithm?
This paper tackles this problem by introducing a pruning strategy which reduces the number of messages exchanged between nodes compared to the algorithm from \cite{you2017distributed}. When a leader is chosen based on closeness centrality, we observe that in the algorithm of \cite{you2017distributed}, even nodes which can not play a central role, for example leaves, overload communication by receiving messages.\footnote{Except in special cases, a leaf node should not play a central role.}

%
%
This work proposes modifications to You et al's  decentralised method to construct a view of a communication graph for distributed computation of node closeness centrality. 
Since we consider an approximate and decentralised method to construct a view of a communication graph, inevitably nodes will construct different topologies describing their interaction network. We will use the term view as shorthand for view of a communication network. 

Our proposed method can be applied to arbitrary distributed networks, and is most likely to be valuable when  nodes form very large networks: reducing the number of messages will be a more pressing concern in large-scale networks. Such applications include  instrumented cars, monitoring systems, mobile sensor networks or general mobile ad-hoc networks. There are many reasons for reducing the number of messages in distributed systems. Here, we consider the following reason. A careful treatment of network communication of agents under weak signal conditions is crucial
\cite{tomic2012toward}.

\textbf{Contributions.}
\label{sec:network_contribution}
%
%
At  each iteration of view construction, each node prunes some nodes in its neighbourhood once and thereafter interacts only with the unpruned nodes to construct its view. We refer to this approach as pruning. The more of these prunable nodes a network contains the better our  algorithm performs relative to the algorithm in \cite{you2017distributed} in terms of number of messages. 
%
We empirically evaluate our approach on a number of benchmark networks from  \cite{leskovec2005graphs} and \cite{leskovec2007graph}, as well as some randomly generated networks, and observe our method outperforms the benchmark method \cite{you2017distributed} in terms of number of messages exchanged during interaction. We also observe some positive impact that pruning has on running time and memory usage.


We make the following assumptions as in \cite{you2017distributed}: 
\begin{itemize}
	\compresslist
	\item nodes are uniquely identifiable;
	\item a node knows identifiers of its neighbours;
	\item communication is bidirectional, FIFO and asynchronous;
	\item each agent is equipped with its own round counter.
\end{itemize} 

The rest of the paper is organised as follows:
Section \ref{sec:related-work} discusses the state of the art for decentralised computation of closeness centrality distribution of networks.
The new algorithm will be discussed in Section \ref{sec:description}.
Section \ref{sec:results} discusses the results. 
In Section \ref{sec:conclusion}, we conclude and propose further work.

\section{Background and related work}
\label{sec:related-work}

In a distributed system, the process of building a view of the network can be centralised or decentralised. It is centralised when the construction of the view of a network is performed by a single node, known as an initiator. Once the initiator has the view of a network, it may send it to the other nodes \cite{naz2017distributed}. If the initiator is not known in advance, the first stage of constructing a view of a network will involve a selection of the initiator. 

Decentralised methods to construct views can be exact or approximate.  Recent literature on methods to construct views can be found in \cite{naz2017distributed}.
%
%
%
%
%
%
%
Exhaustive methods build the complete topology of the communication graph\textemdash they involve an all-pairs shortest path computation. As a consequence, exact approaches suffer from problems of scalability \cite{naz2017distributed}. This can be a particular issue for large networks of nodes with constrained computational power and restricted memory resources.

To overcome the problem mentioned above that exhaustive methods suffer from, approximate methods have been proposed. Unlike exhaustive methods, approximate methods do not result complete knowledge of the communication graph. 


Many distributed methods for view construction are centralised, i.e. they require a single initiator in the process of view construction. 
The disadvantage of approximate methods is that the structure of a view depends on the choice of the initiator. 
An interesting distributed approach for view construction can be found in \cite{kim2013leader}, 
%
%
where an initiator constructs a tree as the view.

%

To the best of our knowledge and according to \cite{naz2017distributed}, decentralised approximate methods for view construction are very scarce because many methods for view construction assume some prior information about the network, so centralised methods are more appropriate. The method proposed by You et al. \cite{you2017distributed} seems to be the state of the art for decentralised approximate methods for view construction, i.e. views are constructed only from local interactions. This method simply runs  breadth-first search \cite{skiena1998algorithm} on each node. 

We next show the decentralised construction of a view using the algorithm in \cite{you2017distributed}.
\subsection{Decentralised view construction}
\label{sec:leaderelection_revue}
%
%

%
We consider the method proposed in You et al. \cite{you2017distributed}\textemdash the ``YTQ method''\textemdash as the state of the art for decentralised construction of a view of a network. The YTQ method was proposed for decentralised approximate computation of centrality measures (closeness, degree and betweenness centralities) of nodes in arbitrary networks. As treated in \cite{you2017distributed}, these computations require a limited view.  At the end of the interaction between nodes, each node can estimate its centrality based on its own view. 
In the following, we show how nodes  construct views using the YTQ method. Here we consider connected and unweighted graphs, and we are interested in computation of closeness centrality. 

Let $\delta_{ij}$ denote the path distance between the nodes $v_i$ and $v_j$ in an (unweighted) graph $G$\label{you:G} with vertex set $\mathcal{V}$ and edge set $\mathcal{E}$. The path distance between two nodes is the length of the shortest path between these nodes. 
\begin{defn}
	The closeness centrality \cite{bavelas1950communication} of a node is the reciprocal of the average path distance from the node to all other nodes. 
	Mathematically, the closeness centrality $c_i$\label{you:ci} is given by
	\begin{equation}
	\label{eq:closeness}
	c_i =  \frac{|\mathcal{V}|-1}{\sum_{j} \delta_{ij}}\,.
	\end{equation}
\end{defn}	
Nodes with high closeness centrality score have short average path distances to all other nodes.

 Each node's view is gradually constructed based on message passing. 
Each node sends its neighbour information to all of its immediate neighbours which relay it onward through the network. Communication between nodes is asynchronous, i.e. there is no common clock signal between the sender and receiver. 
\begin{algorithm}
	\begin{algorithmic}[1]
		\scriptsize
		\Procedure{runDistributedYTQ($\mathcal{N}_i, D$)}{}
		\State \texttt{YTQObject}$\gets $\textsc{distributedYTQ}($\mathcal{N}_i, D$)
		\While{\NOT \texttt{YTQObject}.\textsc{isEnded()}}
		\State \texttt{YTQObject}.\textsc{oneHop()}
		\State \texttt{YTQObject}.\textsc{update()}
		\EndWhile

		\EndProcedure
		\State
		\Class{distributedYTQ}
		\State \textbf{\underline{Class variables}}
		\State \makebox[2cm][l]{$D$} the pre-set maximum number of iterations
		\State  \makebox[2cm][l]{$\mathcal{M}_i$} a queue  used for messages received by the node $v_i$
		\State \makebox[2cm][l]{$\mathcal{N}_i$} a list of immediate neighbours of the node $v_i$
		\State \makebox[2cm][l]{$\mathcal{N}_i^{(t)}$} the set of $(t+1)$-hop neighbours of $v_i$
		\State \makebox[2cm][l]{$\mathcal{N}_{i, t}$} the set nodes known by $v_i$ until the end of iteration $t$
		\State \makebox[2cm][l]{$c_i$} the closeness centrality of node $v_i$
		\State \makebox[2cm][l]{$\delta_i$} a variable used for computation of $c_i$
		\State \makebox[2cm][l]{$t$} the current iteration number
		\State \makebox[2cm][l]{$T$} the actual maximum number of iterations
		\\\hrulefill
		\Constructor{$\mathcal{N}_i, D$}
		\State $(t, D, \delta_i) \gets (0, D, |\mathcal{N}_i|)$
		\State $\mathcal{N}^{(0)}_i\gets  \{v_j:\,v_j\in\mathcal{N}_i \}$
		\State $\mathcal{N}_{t, i}\gets \mathcal{N}^{(0)}_i$.\texttt{clone()}
		\EndConstructor
		\State
		\Procedure{oneHop()}{}
		\If{\NOT \textsc{isEnded()}}
		\For{$v_j \in\mathcal{N}_i$}
		\State $v_i$ sends $\langle \texttt{NeighbouringMessage($i, \mathcal{N}_i^{(t-1)}$)}\rangle$
		to $v_j$
		\EndFor
		\Else
		\State $T\gets \min(t, D)$
				\State $c_i\gets $ \textsc{closenessCentrality}{($T$)}
		\EndIf
		\EndProcedure
		\State
		\Procedure{update()}{}
		\If{\NOT \textsc{isEnded()}}
		\State $t\gets t+1$
		\State $\mathcal{N}^{(t)}_i\gets \emptyset$
		\While{$\mathcal{M}_i.\texttt{size()}\geq 1$}
		\State $(j, \mathcal{N}_j^{(t-1)}) \gets \mathcal{M}_i.\texttt{dequeue}()$
		\State $\mathcal{N}^{(t)}_i\gets \mathcal{N}^{(t)}_i \cup \mathcal{N}^{(t-1)}_j$
		\Comment{``$v_i$ fuses the messages received''}
		\EndWhile
		\State $\mathcal{N}^{(t)}_i\gets \mathcal{N}^{(t)}_i\setminus \mathcal{N}_{t-1, i}$
		\State $\mathcal{N}_{t, i}\gets \mathcal{N}_{t-1,i}\cup \mathcal{N}^{(t)}_i$
		\State $\delta_i\gets \delta_i +t|\mathcal{N}^{(t)}_{i}|$
		
		\EndIf
		\EndProcedure
		\State

		\Function{closenessCentrality}{$T$}
		\State \Return $ \frac{|\mathcal{N}_{i, T}|-1}{\delta_i}$
		\EndFunction
		\State
		\Function{isEnded()}{}
		
		\State \Return $\mathcal{N}_i^{(t)}=\emptyset$ \OR $t=D$
		\label{line:youend}
		\EndFunction
		
		\EndClass
	\end{algorithmic}  
	\caption[\textit{Our presentation of the process given in \cite{you2017distributed}}.]{\textit{Our presentation of the YTQ method in \cite{you2017distributed}. The algorithm gives the code run on a single node $v_i$. 
			} \textit{An example of pseudocode is given in the procedure \textsc{runDistributedBFS}.}}
	\label{algo:construction0}
\end{algorithm}
The YTQ method \cite{you2017distributed} that each node $v_i$ uses to construct its view is given in Algorithm \ref{algo:construction0}. 

Let $\mathcal{N}_i$\label{you:Ni} be the set of neighbours of $v_i$ \label{you:vi} and $\mathcal{N}_i^{(t)}$\label{you:Nt} the set of nodes at distance $t+1$ from a node $v_i$, so $\mathcal{N}^{(0)}_i =\mathcal{N}_i$. The initial set of neighbours, $\mathcal{N}_i$, is assumed to be known. 

During each iteration, each node sends its neighbourhood information to all its immediate neighbours. We are restricted to peer-to-peer communication because nodes have limited communication capacity. Each node waits for communication from all of its direct neighbours after which it updates an internal round counter.  
A node $v_i$ stores messages received in a queue, represented by $\mathcal{M}_i$. 
After round $t\geq 1$, the topology of $v_i$'s view  is updated as follows
\begin{equation}
\label{you:NiD}
\mathcal{N}^{(t)}_i\gets \bigcup_{v_j\in \mathcal{N}_i}\mathcal{N}^{(t-1)}_j \setminus \mathcal{N}_{i, t}\,,
\end{equation}
where
\begin{equation}
\mathcal{N}_{i, t}\gets  \bigcup_{k=0}^{t-1}\mathcal{N}^{(k)}_i\,.
\end{equation}

The algorithm terminates after at most $D$\label{you:D} iterations, where $D$ is an input of the algorithm.\footnote{It can also be set or determined in a distributed manner (i.e. the value of $D$ can be determined by nodes during interaction) as in \cite{garin2012distributed}.} In this paper, we consider a pre-set value of $D$. However, some nodes can also reach their equilibrium stage before the iteration $D$ (e.g. nodes which are more central than others). Such nodes need to terminate when equilibrium is reached (see Line \ref{line:youend} in Algorithm \ref{algo:construction0}). A node $v_i$ reaches equilibrium at iteration $t$ when
$$
\mathcal{N}^{(t)}_i = \emptyset\,.
$$
At the end, every node has a view of network, and so the required centralities can be calculated locally. This view construction method is approximate when the total number of iterations $D$ is less than the diameter of the graph, otherwise the method is exhaustive, i.e. all views correspond to the exact correct information, assuming a failure-free scenario. With a decentralised approximate method, nodes may have different views at the end. Each node will evaluate its closeness centrality based on its own view of the network.

\section{Decentralised view construction}
\label{sec:description}

The idea behind pruning technique is that, during view construction, some nodes can be pruned (i.e. some nodes will stop relaying neighbour information). Pruned nodes are not involved in subsequent steps of the algorithm and their closeness centralities are treated as zero. 
%

%
Our approach thus applies pruning after each iteration $t$ of communication of the YTQ method. During the pruning stage, each node checks whether it or any nodes in its one-hop neighbourhood should be pruned. 
Nodes can identify the other nodes in their neighbourhood being pruned so that they do not need to wait for or send messages to them in subsequent iterations. This reduces the number of messages exchanged between nodes.

Given two direct neighbours, their sets of pruned nodes are not necessarily the same, and nodes do not need to exchange such information between themselves. 
\subsection{Pruning}
Before describing our proposed pruning method, we argue that pruning preserves information of most central nodes of a graph using closeness centrality.
\subsubsection{Theoretical justification}
While we are aiming to estimate closeness centrality distribution on a graph using pruning, the concept of pruning can directly be related to eccentricity centrality \cite{hage1995eccentricity}. The eccentricity of a node is the maximum distance between the node and another node. Eccentricity and eccentricity centrality are reciprocal to each other. We consider the following points to achieve our goal.
\begin{itemize}
	\compresslist
	\item Pruned nodes have relatively high eccentricities (as will be discussed later in Lemma \ref{lem:eccentricity}).
	\item Previous studies show that eccentricity and closeness centralities are strongly positively correlated for various types of graphs \cite{batool2014towards, meghanathan2015correlation}. This is partly due to the fact that they both operate on the concepts of paths.
\end{itemize}
From what precedes, our proposed pruning method is then recommended for approximations of closeness centralities for categories of graphs where eccentricity and closeness centralities are highly correlated. 
\subsubsection{Description}
We will first show how a node identifies prunable nodes after each iteration.
There are two types of objects (leaves, and nodes causing triangles) that a node can prune after the first iteration.  When a node is found to be of one of these types, it is pruned.
Since nodes have learnt about their $2$-hop neighbours after the end of the first iteration, a node $v_i$ knows the neighbours $\mathcal{N}_j$ of each of its direct neighbours $v_j$. Our pruning method is decentralised, so each node is responsible to identify prunable nodes in its neighbourhood, including itself. Let $d_i$ denote the degree of node $v_i$.
\begin{defn}[A node causing a triangle]
	\label{defn:node_triangle}
A non-leaf node $v_j$ causes a triangle if $d_j=2$ and its two immediate neighbours are immediate neighbours to each other. 
		 \label{communication:Ni} 

\end{defn}


Let $\mathcal{F}_i^{(t)}$ \label{communication:Fp_i} denote the set of pruned nodes known by a node $v_i$ at the end of iteration $t$ (with $t\geq 1$). 

\begin{figure}[h]
	\centering
	\includegraphics[width=0.2\textwidth]{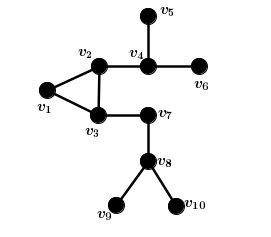}
	\caption[\textit{Illustration of graphs with prunable nodes.}]{\textit{\small{{Illustration of graphs with prunable nodes. }}}}
\label{fig:cycle_node}
\end{figure}



%
%

 Let $\mathcal{N}^\mathrm{up}_{i, t}=\mathcal{N}_{i}\setminus \mathcal{F}^{(t-1)}_{i}$\label{pruning:Niup} denote the set of neighbours of $v_i$ which have not yet been pruned at the beginning of iteration $t$. Functions that a node $v_i$ applies to detect prunable elements  in its neighbourhood after the first iteration are given in Algorithm \ref{algo:pruning} (see functions \textsc{leavesDetection} and \textsc{triangleDetection}). 	
 
 Note that for complete graphs, pruning is not involved because, after the first iteration each node will realise that it its current view of the graph is complete.

So far, we have described how elements of $\mathcal{F}_i^{(1)}$ are identified by node $v_i$. We now consider the case of further pruning 
which is straightforward: new nodes should be pruned when they have no new information to share with their other active neighbours. Thus a node stops relaying neighbouring information to neighbours from which it receives no new information. 

 At the end of iteration $t$, a node $v_i$ considers itself as element of $\mathcal{F}_i^{(t)}$ if it gets all its new information from only  one of its neighbours at that iteration. Also, node $v_i$ prunes $v_j$ if the neighbouring information $\mathcal{N}_j^{(t)}$ sent by $v_j$ to $v_i$ does not contain new information, i.e.
 \begin{equation}
 \label{eq:detection}
 \mathcal{N}_j^{(t)}\subseteq \bigcup_{l<t}\mathcal{N}_i^{(l)}\,.
 \end{equation}
Our hope is that the most central node is among the nodes which are not pruned on termination of the algorithm. Equation \ref{eq:detection} indicates for a node $v_j$ to be pruned, there must be another node $v_i$ which is unpruned because a comparison needs to be done. This is true because there are always unpruned nodes which remain after the first iteration, except a complete graph of at most three nodes in which case pruning is not invoked. This means that at the end of our pruning method, there will always remain some unpruned nodes. 

Nodes in $\mathcal{F}^{(t)}_i$  for $t
\geq 2$ could be viewed as leaves or nodes causing triangles in the subgraph obtained after the removal of all previously pruned nodes. Recall that an unpruned node only interacts with its unpruned neighbours and the number of the unpruned neighbours of a node may get reduced over iterations. So at some iteration an unpruned node can be viewed as leaf if it remains only with one unpruned neighbour. 

 Let 
 $$
 \mathcal{F}_{i, t} = \bigcup_{l=1}^t \mathcal{F}^{(l)}_i\,. 
 $$
  
 The procedure for how a node $v_i$ detects elements of $\mathcal{F}^{(t)}_i$ at the end of each iteration $t\geq 2$ is given in Algorithm \ref{algo:pruning} (see function \textsc{furtherPruningDetection}).

  After describing pruning, we now connect it to eccentricity (see Lemma \ref{lem:eccentricity}) as mentioned above. Let $\mathrm{ecc}_i$\label{pruning:hi} denote the eccentricity  of a node $v_i$, i.e.
 \begin{equation}
 \label{eq:eccentricity}
 \mathrm{ecc}_i =  \max_{v_j\in \mathcal{V}} \delta_{ij}\,.
 \end{equation}
 \begin{lem}
 	\label{lem:eccentricity}
 	If $v_j\in \mathcal{N}_i$ such that 
 	$$
 	v_j\in \bigcup_{t\geq 1}\mathcal{F}_i^{(t)}\,,
 	$$
 	then
 	$$
 	\mathrm{ecc}_j \geq \mathrm{ecc}_i\,.
 	$$
 \end{lem}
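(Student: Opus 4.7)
The plan is to prove the lemma by case analysis on how $v_j$ ends up in $\bigcup_{t\geq 1}\mathcal{F}_i^{(t)}$: either (i) $v_j$ is flagged as a leaf at iteration~$1$, (ii) $v_j$ is flagged as triangle-causing at iteration~$1$, or (iii) $v_j$ is pruned by \textsc{furtherPruningDetection} at some iteration $t\geq 2$.

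Cases (i) and (ii) follow from local structural arguments. For (i), $v_i$ is the only neighbour of $v_j$, so every shortest path from $v_j$ to any vertex $v_l\neq v_j$ begins with the edge $v_jv_i$, giving $\delta_{jl}=1+\delta_{il}$; taking the maximum over $v_l$ yields $\mathrm{ecc}_j\geq\mathrm{ecc}_i$ (with equality only in the trivial two-vertex case). For (ii), let $v_k$ be the other neighbour of $v_j$, so that $v_k\sim v_i$. Every shortest path from $v_j$ starts with an edge to $v_i$ or $v_k$, so $\delta_{jl}=1+\min(\delta_{il},\delta_{kl})$. The adjacency $v_i\sim v_k$ forces $|\delta_{il}-\delta_{kl}|\leq 1$, so $\min(\delta_{il},\delta_{kl})\geq\delta_{il}-1$, hence $\delta_{jl}\geq\delta_{il}$, and thus $\mathrm{ecc}_j\geq\mathrm{ecc}_i$.

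Case (iii) is where the real work lies, because the pruning condition $\mathcal{N}_j^{(t)}\subseteq\bigcup_{l<t}\mathcal{N}_i^{(l)}$ is a statement about a single BFS layer, while eccentricity is a global quantity. The plan is to first prove a propagation sub-lemma: if the condition holds at iteration~$t$ (every node at distance $t+1$ from $v_j$ lies within distance $t$ of $v_i$), then it also holds at iteration $t+1$. This is a one-line triangle-inequality argument: any $v_m$ at distance $t+2$ from $v_j$ has a neighbour $v_p$ at distance $t+1$ from $v_j$, so by hypothesis $\delta_{ip}\leq t$, whence $\delta_{im}\leq\delta_{ip}+1\leq t+1$. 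Induction then yields the condition at every iteration $\geq t$, and in particular at $t^\star=\mathrm{ecc}_j-1$, since meaningful pruning of an actively communicating $v_j$ requires $t\leq\mathrm{ecc}_j-1$. To finish, suppose for contradiction that $\mathrm{ecc}_j<\mathrm{ecc}_i$ and pick $v_l$ realising $\delta_{il}=\mathrm{ecc}_i$. Because $v_j\in\mathcal{N}_i$, the triangle inequality combined with the assumption gives $\mathrm{ecc}_i-1\leq\delta_{jl}\leq\mathrm{ecc}_j\leq\mathrm{ecc}_i-1$, forcing $\delta_{jl}=\mathrm{ecc}_j=\mathrm{ecc}_i-1$ and hence $v_l\in\mathcal{N}_j^{(\mathrm{ecc}_j-1)}$. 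The propagated condition at $t^\star$ then demands $\delta_{il}\leq\mathrm{ecc}_j-1=\mathrm{ecc}_i-2$, contradicting $\delta_{il}=\mathrm{ecc}_i$. The hardest step is the propagation sub-lemma, which is what converts the local pruning criterion into a global eccentricity bound; everything else is triangle-inequality bookkeeping.
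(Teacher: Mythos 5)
Your proposal is correct, and it is in fact considerably more of a proof than the paper gives: the paper's own argument simply restates that a pruned neighbour $v_j$ satisfies the no-new-information condition (Equation \ref{eq:detection}) and then asserts that $\mathrm{ecc}_j\geq\mathrm{ecc}_i$ is ``straightforward'' from the definition of $\mathcal{N}_i^{(t)}$, with no case analysis and no argument bridging the single-layer containment to the global eccentricity bound. You supply exactly the missing content: the local triangle-inequality arguments for the two first-iteration pruning types (leaves and triangle-causing nodes, both of which can also be subsumed under Equation \ref{eq:detection} at $t=1$), and, crucially, the propagation sub-lemma showing that if every node at distance $t+1$ from $v_j$ lies within distance $t$ of $v_i$ then the same holds at all later layers, which is what turns the one-layer pruning test into the eccentricity comparison; your concluding contradiction via the choice of $v_l$ with $\delta_{il}=\mathrm{ecc}_i$ is sound. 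Two small caveats, neither fatal: first, your step ``meaningful pruning of an actively communicating $v_j$ requires $t\leq\mathrm{ecc}_j-1$'' deserves an explicit sentence --- it holds because a node that has reached equilibrium stops sending rather than transmitting an empty layer, so the set tested in Equation \ref{eq:detection} is nonempty and hence nodes at distance $t+1$ from $v_j$ exist; second, like the paper, you implicitly identify the algorithmic set $\mathcal{N}_j^{(t)}$ with the true set of nodes at distance exactly $t+1$ from $v_j$, which presumes that earlier prunings have not deprived the still-active $v_j$ of any information (the paper asserts this equality of views elsewhere without proof), so your argument inherits, but does not worsen, that assumption.
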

 \begin{proof}
 	A node $v_j$ is pruned if it has a direct neighbour $v_i$ from which it can receive new information while at the same time it can not provide new information to that neighbour. Given two direct neighbours $v_j$ and $v_i$ where $v_j$ has been pruned at iteration $t$ by $v_i$, we have Equation \ref{eq:detection}.
 	From the definition of $\mathcal{N}_i^{(t)}$ and Equation \ref{eq:detection}, it is straightforward that $\mathrm{ecc}_j\geq \mathrm{ecc}_i$.
 	%
 	%
 \end{proof}
 
 
 From Lemma \ref{lem:eccentricity} it can be seen that prunable nodes are nodes with relatively high eccentricities. So pruning can not introduce errors when searching for a node of maximum eccentricity centrality. 

\begin{algorithm}
	\begin{algorithmic}[1]
		\scriptsize
		\Procedure{runPruning($\mathcal{N}_i, D$)}{}
		\State \texttt{PruningObject}$\gets$\textsc{pruning}($\mathcal{N}_i, D$)
		\State \texttt{PruningObject}.\textsc{initialOneHop()}
		\State \texttt{PruningObject}.\textsc{initialUpdate()}
		\State \texttt{PruningObject}.\textsc{firstPruningDetection()}
		\While{\NOT \texttt{PruningObject}.\textsc{isEnded()}}
		\State \texttt{PruningObject}.\textsc{nextOneHop()}
		\State \texttt{PruningObject}.\textsc{nextUpdate()}
		\EndWhile
		\EndProcedure
		\State
		\Class{pruning}
		\State \textbf{\underline{Class variables}}
		\State \makebox[1cm][l]{$D$} the pre-set maximum number of iterations
		\State \makebox[1cm][l]{$\mathcal{F}^{(t)}_{i}$} set of pruned nodes known by $v_i$ at the end of iteration $t$
		\State \makebox[1cm][l]{$\mathcal{F}_{i, t}$} set of pruned nodes known by $v_i$ at the end of iteration $t$
		\State \makebox[1cm][l]{$\mathcal{M}_i$} a message queue for messages received by the node
		\State \makebox[1cm][l]{$\mathcal{N}_{i}$} set of immediate neighbours of $v_i$
		\State \makebox[1cm][l]{$\mathcal{N}^\mathrm{up}_{i, t}$} set of neighbours of $v_i$ which are still active up to iteration $t$
		\State \makebox[1cm][l]{$\mathcal{Q}_{ij}$} the one-hop neighbours of node $v_j$ sent to $v_i$ at the first iteration 
		
		\State \makebox[1cm][l]{$\mathcal{N}^{(t)}_{i}$} set of new nodes discovered by $v_i$ at the end of iteration $t$
		\State \makebox[1cm][l]{$\mathcal{N}_{i, t}$} view of communication graph of $v_i$ up to iteration $t$
		\State \makebox[1cm][l]{$c_i$} the closeness centrality of node $v_i$
		\State \makebox[1cm][l]{$\delta_i$} a variable used for computation of $c_i$
		\State \makebox[1cm][l]{$t$} current iteration number
		\State \makebox[1cm][l]{$T$} the actual maximum number of iterations

		\\\hrulefill
		\Constructor{$\mathcal{N}_i, D$}
		\State $(t, D, \delta_i)\gets (0, D, |\mathcal{N}_{i}|)$
		\State $\mathcal{N}^{(0)}_i\gets  \{v_{j}: \forall v_j\in\mathcal{N}_i \}$
		\State $\mathcal{N}_{i, 0}\gets \mathcal{N}^{(0)}_i$.\texttt{clone()} \Comment{``$v_i$ detects its immediate neighbours''}\label{line:initiation_end}
		
		\EndConstructor
						\State
		\Function{initialOneHop()}{}
		
		\For{$v_j \in \mathcal{N}_i$}
		\State $v_i$ sends $\langle \texttt{NeighbouringMessage($i, \mathcal{N}_i^{(0)}$)}\rangle$
		to $v_j$
		\EndFor
		\EndFunction
		\State
				\Procedure{initialUpdate()}{}
		\State $\mathcal{N}^{(1)}_i\gets \emptyset$
		\State $t\gets t + 1$
		\While{$\mathcal{M}_i.\texttt{size()}\geq1$}
		\State $(j,\mathcal{N}_j^{(0)}) \gets \mathcal{M}_i.\texttt{dequeue}()$	\label{line:first_iteration_begin}	
		\State $\mathcal{N}^{(1)}_i\gets \mathcal{N}^{(1)}_i \cup \mathcal{N}^{(0)}_j$
		\Comment{``$v_i$ fuses the messages received''}	
		\State $\mathcal{Q}_{ij}\gets \mathcal{N}^{(0)}_{j}$
		\EndWhile
		\State $\mathcal{N}^{(1)}_i\gets \mathcal{N}^{(1)}_i\setminus \mathcal{N}_{i, 0}$ 
		\State $\mathcal{N}_{i, 1}\gets \mathcal{N}_{i, 0}\cup \mathcal{N}^{(t)}_i$\label{line:first_iteration_end}
		\State $\delta_i\gets \delta_i +t|\mathcal{N}^{(1)}_i|$
		\EndProcedure
				\State
		\Procedure{firstPruningDetection()}{}
		\State \textsc{leavesDetection()} 
		\State  \textsc{triangleDetection()} 
		\State $\mathcal{F}_{i, t}\gets \mathcal{F}^{(t)}_i$.\texttt{clone()}
		\EndProcedure
		\State
		\Procedure{leavesDetection()}{}\Comment{``$v_i$ detects leaves in its neighbourhood}
		\State $\mathcal{F}^{(1)}_i\gets \emptyset$
		\For{$v_j\in \mathcal{N}_i\cup\{v_i\}$}
		\If{$|\mathcal{Q}_{ij}|=1$}

		\algstore{pruningTemplate}
	\end{algorithmic}  
	\caption[\textit{Our proposed pruning method in a failure-free scenario.}]{\textit{Our proposed pruning method in a failure-free scenario. } \textit{  The algorithm gives the code executed for a single node $v_i$. An example of pseudo code is given in the procedure \textsc{runPruning}.}}
	\label{algo:pruning}
\end{algorithm}
\begin{algorithm}
	\begin{algorithmic}[1]
		\algrestore{pruningTemplate}
		\scriptsize
			\State $\mathcal{F}^{(1)}_i\gets \mathcal{F}^{(1)}_i\cup \{v_j\}$
		\EndIf
		\EndFor
		\EndProcedure
		\State
\Procedure{triangleDetection}{}\Comment{``$v_i$ detects elements causing triangles in its neighbourhood}
\For{$v_j\in \mathcal{N}^\mathrm{up}_{i, 1}\cup\{v_i\}$}
\If{  $|\mathcal{Q}_{ij}|=2$}
\State Let $v_f, v_g$ be the two immediate neighbours of $v_j$
\If{$v_f\in\mathcal{Q}_{ig}$}
\State $\mathcal{F}^{(1)}_i\gets \mathcal{F}^{(1)}_i\cup \{v_j\}$
\EndIf

\EndIf
\EndFor

\EndProcedure
\State
\Procedure{nextOneHop()}{}
\If{\NOT \textsc{isEnded()}}
\State $\mathcal{N}^\mathrm{up}_{i, t}\gets \mathcal{N}^\mathrm{up}_{i, t}\setminus \mathcal{F}_{i, t}$

\For{$v_j \in \mathcal{N}^\mathrm{up}_{i, t}$}
\State $v_i$ sends $\langle \texttt{NeighbouringMessage($i, \mathcal{N}_i^{(t-1)}$)}\rangle$
to $v_j$
\EndFor
\EndIf
\EndProcedure
\State
\Procedure{nextUpdate()}{}
\If{\NOT \textsc{isEnded()}}
\State $t\gets t+1$
\State  $\mathcal{N}^{(t)}_i\gets \emptyset$
\While{$\mathcal{M}_i.\texttt{size()}\geq 1$}
\State $(j, \mathcal{N}_j^{(t-1)}) \gets \mathcal{M}_i.\texttt{dequeue}()$
\State $\mathcal{N}^{(t)}_i\gets \mathcal{N}^{(t)}_i \cup \mathcal{N}^{(t-1)}_j$
\Comment{``$v_i$ fuses the messages received''}
\EndWhile
\State $\mathcal{N}^{(t)}_i\gets \mathcal{N}^{(t)}_i\setminus \mathcal{N}_{i, t-1}$ 
\State $\mathcal{N}_{i, t}\gets \mathcal{N}_{ i, t-1}\cup \mathcal{N}^{(t)}_i$
\State \textsc{furtherPruningDetection()} 
\State $\mathcal{F}_{i, t}\gets \mathcal{F}_{i, t} \cup \mathcal{F}^{(t)}_{i}$
\State $\delta_i\gets \delta_i +t|\mathcal{N}^{(t)}_i|$
\Else
\State $T\gets \min(t, D)$
\State $c_i\gets \textsc{closenessCentrality}(T)$
\EndIf

\EndProcedure
		\State
		\Procedure{furtherPruningDetection()}{}\Comment{``$v_i$ detects elements of $\mathcal{F}^{(t)}_i$ in its neighbourhood''}
		\State $\mathcal{F}^{(t)}_i\gets \emptyset$
		\For{$v_j\in \mathcal{N}^\mathrm{up}_{i, t}$}
		\If{$\mathcal{N}^{(t)}_j\subseteq \mathcal{N}_{i, t-1}$}
		\State $\mathcal{F}^{(t)}_i\gets \mathcal{F}^{(t)}_i\cup \{v_j\}$
		\EndIf
		\EndFor
		\If{$|\mathcal{N}^\mathrm{up}_{i, t}|=1$ \AND $\mathcal{N}_i^{(t)}\neq \emptyset$ }
		\State $\mathcal{F}^{(t)}_i\gets \mathcal{F}^{(t)}_i\cup \{v_i\}$
		\EndIf
		\EndProcedure
		\State

		\Function{closenessCentrality}{$T$}
		\If{$v_i \in \mathcal{F}_{i, T}$}
		\State \Return $ 0$
		\EndIf
		\State \Return $ \frac{|\mathcal{N}_{i, T}|-1}{\delta_i}$
		\EndFunction
		
		\State
		\Function{isEnded()}{}
		\State \Return $t=D$\OR $\mathcal{N}_i^{(t)}=\emptyset$ \OR $v_i \in \mathcal{F}_{i, t}$
		\EndFunction
		\EndClass
	\end{algorithmic}  
\end{algorithm}

\begin{example}
	Consider execution of Algorithm \ref{algo:pruning} on the communication graph in Figure \ref{fig:cycle_node}, with $D=4$. The results of our pruning method on this graph are presented in Table \ref{tab:pruning_example} and discussed below.
	
	\begin{table}[h]
		\centering
		\scalebox{0.7}{
			\begin{tabular}{|c|c|c|c|c|c|}
				\hline
				\textbf{Node sets} & $\mathrm{ecc}_i$ & $t=1$ & $t=2$ & $t=3$ & $t=4$ \\
				\hline
				$\mathcal{F}_{1}^{(t)}$ & $4$ &$v_1$&$\perp$&$\perp$&$\perp$\\
				\hline
				$\mathcal{F}_{2}^{(t)}$& $4$ &$v_1$&$v_4$&$v_2$&$\perp$\\
				\hline
				$\mathcal{F}_{3}^{(t)}$& $3$ &$v_1$&$\emptyset$&$v_2, v_7$&$\top$\\
				\hline
				$\mathcal{F}_{4}^{(t)}$ & $5$ &$v_5, v_6$&$v_4$&$\perp$&$\perp$\\
				\hline
				$\mathcal{F}_{5}^{(t)}$ & $6$ &$v_5$&$\perp$&$\perp$&$\perp$\\
				\hline
				$\mathcal{F}_{6}^{(t)}$ & $6$ &$v_6$&$\perp$&$\perp$&$\perp$\\
				\hline
				$\mathcal{F}_{7}^{(t)}$ & $4$ &$\emptyset$&$v_8$&$v_7$&$\perp$\\
				\hline
				$\mathcal{F}_{8}^{(t)}$ & $5$ &$v_9, v_{10}$&$v_8$&$\perp$&$\perp$\\
				\hline
				$\mathcal{F}_{9}^{(t)}$ & $6$ &$v_9$&$\perp$&$\perp$&$\perp$\\
				\hline
				$\mathcal{F}_{10}^{(t)}$ & $6$ &$v_{10}$&$\perp$&$\perp$&$\perp$\\
				\hline
		\end{tabular}}
		\caption[\textit{Table indicating pruned nodes, identified at each node after each iteration using the graph in Figure \ref{fig:cycle_node}.}]{\textit{Table indicating  pruned nodes, identified at each node after each iteration using the graph in Figure \ref{fig:cycle_node}. $\perp$ indicates that the corresponding node has pruned itself and $\top$ indicates that the node has reached an equilibrium. }}
		\label{tab:pruning_example}
		\vspace{-0.3cm}
	\end{table}
After the first iteration, each node needs to identify prunable nodes in its neighbourhood, including itself. Using Algorithm \ref{algo:pruning}, $\mathcal{F}_1^{(1)}=\{v_1\}$, $v_1$ will prune itself. Also, $\mathcal{F}_2^{(1)}=\mathcal{F}_3^{(1)}=\{v_1\}$, $v_2$ and $v_3$ will also prune $v_1$.  At the first iteration, all the leaves (i.e. $v_5, v_6, v_9$ and $v_{10}$) are pruned; they all have $\mathrm{ecc}_i=6$. But, a non-leaf node, $v_1$ (with $\mathrm{ecc}_i=4$), is also pruned on the first iteration. 

At the beginning of iteration $t=2$, $v_1, v_5, v_6, v_9$ and $v_{10}$ are no longer involved since they have been identified as prunable nodes at the end of the previous iteration; $\mathcal{F}_2^{(2)}=\mathcal{F}_4^{(2)}=\{v_4\}$ and $\mathcal{F}_7^{(2)}=\mathcal{F}_8^{(2)}=\{v_8\}$; and the node $ v_3$ does not identify any prunable node after this iteration. At this iteration, the remaining nodes with high $\mathrm{ecc}_i$ (i.e. nodes $v_4$ and $v_8$) are pruned. The results for the remaining iterations are shown in Table \ref{tab:pruning_example}.
\end{example}

In our proposed distributed system, at the end of each iteration each node is aware of whether each of its direct neighbour is pruned or not). A pruned node neither sends a message nor waits for a message. So when a node is still unpruned, it knows which immediate neighbours to send messages to and which to wait for messages from. This prevents the nodes from suffering from starvation or deadlock in failure-free scenarios \cite{coulouris2005distributed}.
\subsection{Communication analysis}
\label{sec:analysis}
In this section, we evaluate the impact of pruning on the communication requirements for view construction. 
%
Let  $u_i^{(t)}$ denote the number of neighbours of $v_i$  which have been pruned at the end of iteration $t$.
Let $Y^{(D)}_i$ \label{communication:Yi} and $P^{(D)}_i$ \label{communication:Pi} be the number of messages that the node $v_i$ receives according to Algorithm \ref{algo:construction0} and the number of messages $v_i$ receives through the use of pruning in Algorithm \ref{algo:pruning} for $D$ rounds respectively. 
We expect the number of messages  any node $v_i$ saves due to pruning to satisfy
\begin{equation}
\label{eq:acualmes}
\Delta^{(D)}_{i}\equiv Y^{(D)}_i-P^{(D)}_i\geq 0\,.\end{equation}



A node $v_i$ receives $d_i$ messages at the end of each iteration using the YTQ method. Recall that our proposed pruning and the YTQ methods can also terminate when an equilibrium is reached. Let $H_i$ denote the iteration after which a node $v_i$ applying the YTQ and our pruning methods reaches an equilibrium. This value is the same for both algorithms because at iteration $t$, a node $v_i$ (which should be an unpruned node using our proposed method) has the same view using both algorithms. Note that for our pruning method, a pruned node does not reach an equilibrium and it is not possible to prune all nodes before equilibrium. Let $h_i^{(t)}$ be the number of neighbours of $v_i$ which have reached equilibrium at the end of iteration $t$. If  at least one neighbour of an unpruned node $v_i$ has reached equilibrium by iteration $t$, then $v_i$ will reach equilibrium by iteration $t+1$.
For the YTQ method, 
\begin{equation}
\label{eq:you_all_free}
Y^{(D)}_i =  \sum_{t=1}^{\min(D, H_i)} \left(d_i-\sum_{l=0}^{t-1} h_i^{(l)}\right)\,.
\end{equation}

Let $L_i$ denote the round at which a node $v_i$ is pruned ($L_i=+\infty$ for unpruned node $v_i$).
\begin{lem}
	\label{lem:pruning_free} 
	 The number of messages received by a node $v_i\in \mathcal{V}$ in Algorithm \ref{algo:pruning} is	 
	$$P^{(D)}_i = \sum_{t=1}^{\min(D, H_i, L_i)} \left(d_i-\sum_{l=0}^{t-1} \left(h_i^{(l)}+u^{(l)}_i\right)\right)\,.$$
\end{lem}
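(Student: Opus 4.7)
The plan is to prove the formula by a direct accounting of messages over each round, using the termination rule of Algorithm 2 to determine the outer summation and the pruning/equilibrium status of $v_i$'s neighbours to determine the inner summand.

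First I would fix the termination index. Node $v_i$ enters \textsc{nextUpdate} (and hence can receive messages in round $t$) precisely when \textsc{isEnded} returned \textbf{false} at round $t-1$. From the definition of \textsc{isEnded} in Algorithm \ref{algo:pruning}, $v_i$ halts at the first round $t$ for which $t=D$, or $\mathcal{N}_i^{(t)}=\emptyset$ (equilibrium, i.e. $t=H_i$), or $v_i\in\mathcal{F}_{i,t}$ (i.e.\ $v_i$ itself is pruned, i.e.\ $t=L_i$). Hence the total number of message-receiving rounds is exactly $\min(D,H_i,L_i)$, which gives the outer summation range in the claim.

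Next I would count the messages received in a fixed round $t\le \min(D,H_i,L_i)$. By the \textsc{nextOneHop} procedure, a neighbour $v_j$ sends $v_i$ a message in round $t$ exactly when (a) $v_j$ has $v_i\in\mathcal{N}_j^{\mathrm{up}}$ at the start of round $t$, i.e.\ $v_i$ has not been pruned from $v_j$'s perspective, and (b) $v_j$ has not itself halted before round $t$. By the assumptions of bidirectional FIFO communication and symmetric pruning rules, $v_j$ is active and still treats $v_i$ as an active neighbour iff $v_j$ has neither been pruned nor reached equilibrium by the end of round $t-1$. The number of $v_i$'s neighbours pruned by the end of round $t-1$ is $\sum_{l=0}^{t-1}u_i^{(l)}$ by the definition of $u_i^{(l)}$, and the number that have reached equilibrium is $\sum_{l=0}^{t-1}h_i^{(l)}$ by the definition of $h_i^{(l)}$. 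Since these two classes are disjoint (a neighbour pruned in some earlier round never subsequently reaches equilibrium, and vice versa), the count of active neighbours sending to $v_i$ in round $t$ is
\[
d_i-\sum_{l=0}^{t-1}\bigl(h_i^{(l)}+u_i^{(l)}\bigr).
\]
Summing this expression from $t=1$ to $t=\min(D,H_i,L_i)$ yields the claimed formula for $P_i^{(D)}$.

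The main obstacle I anticipate is justifying disjointness of the pruned-neighbour and equilibrium-neighbour populations, together with the claim that symmetry of the pruning decisions across an edge suffices to ensure that once $v_j$ appears in $u_i^{(\cdot)}$ or $h_j$ stops, $v_j$ truly stops sending to $v_i$. This follows by inspecting \textsc{furtherPruningDetection}: a pruned node halts via \textsc{isEnded}, an equilibrium node likewise halts, and the text just before Section \ref{sec:analysis} argues that in the failure-free setting each unpruned node locally knows which neighbours are still active, so no spurious or missing messages occur. Once this is established, the remaining argument is a routine sum over rounds.
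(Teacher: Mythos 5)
Your proposal is correct and follows essentially the same route as the paper: a per-round count of $d_i-\sum_{l=0}^{t-1}\bigl(h_i^{(l)}+u_i^{(l)}\bigr)$ active senders, summed over rounds $1$ to $\min(D,H_i,L_i)$ as dictated by the \textsc{isEnded} condition (iteration cap, equilibrium, or self-pruning). Your added remarks on the disjointness of pruned and equilibrium neighbours and on halted neighbours no longer sending are consistent with the paper's own observations (e.g.\ that a pruned node does not reach equilibrium) and merely make explicit what the paper's shorter proof leaves implicit.
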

\begin{proof}
	At the end of each iteration $t$, the node
	$v_i$ receives $\left(d_i-\sum_{l=0}^{t-1} \left(h_i^{(l)}+u^{(l)}_i\right)\right)$ messages. Note that $u_i^{(0)}=h_i^{(0)}=0$. Also a node can stop
	interacting with other nodes after it is pruned. If the node $v_i$ is pruned at the end of iteration $\min(D, H_i, L_i)$, then it stops receiving messages. 
\end{proof}

\begin{thm}
	\label{thm:delta_free} 
	 The number of messages saved by a node $v_i\in \mathcal{V}$ in a failure-free scenario is	 
	\begin{align*}
	\Delta^{(D)}_{i} &= \sum_{t=1}^{\min(D, H_i, L_i)} \sum_{l=0}^{t-1}  u^{(l)}_i+ \sum_{t=\min(D, H_i, L_i)+1}^{\min(D, H_i)} \left(d_i-\sum_{l=0}^{t-1} h_i^{(l)}\right)\,.
	\end{align*}
\end{thm}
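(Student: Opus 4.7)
The plan is to obtain $\Delta^{(D)}_i$ by direct algebraic substitution using Equation \ref{eq:acualmes}, the closed form for $Y^{(D)}_i$ in Equation \ref{eq:you_all_free}, and the closed form for $P^{(D)}_i$ in Lemma \ref{lem:pruning_free}. To keep the bookkeeping light, I would abbreviate $M_1 = \min(D, H_i)$ and $M_2 = \min(D, H_i, L_i)$, noting that $M_2 \le M_1$ because $L_i$ is allowed to be $+\infty$ but nothing else in $M_2$ exceeds what is in $M_1$.

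First I would write $\Delta^{(D)}_i = Y^{(D)}_i - P^{(D)}_i$ and split the outer sum in $Y^{(D)}_i$ into the range $1 \le t \le M_2$ and the range $M_2 + 1 \le t \le M_1$. The second range is empty whenever $L_i \ge H_i$ (or $L_i \ge D$), which is a useful sanity check: in that case $v_i$ is never pruned before it would otherwise have stopped anyway, so the only savings should come from the pruning-of-neighbours term. Next, I would line up the summands of $Y^{(D)}_i$ and $P^{(D)}_i$ over $1 \le t \le M_2$; inside each summand the $d_i$ and $-\sum_{l=0}^{t-1} h_i^{(l)}$ terms cancel exactly, leaving $\sum_{l=0}^{t-1} u_i^{(l)}$. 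This yields the first term of the claimed expression. The leftover range $M_2 + 1 \le t \le M_1$ contributes $Y$-terms with no corresponding $P$-terms to cancel, giving the second term verbatim.

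The main obstacle is really just sanity-checking the interaction between the three potentially distinct stopping times $D$, $H_i$ and $L_i$, and making sure the edge cases are handled: (i) if $v_i$ is never pruned ($L_i = +\infty$), then $M_2 = M_1$ and the second sum collapses to an empty sum, correctly reducing $\Delta_i^{(D)}$ to the accumulated neighbour-pruning savings; (ii) if $v_i$ is pruned strictly before equilibrium ($L_i < H_i$ and $L_i \le D$), the second sum is non-empty and exactly records the messages $v_i$ would have continued to receive under the YTQ method after round $L_i$, adjusted for the neighbours that had already reached equilibrium; (iii) if termination is caused by $D$ first, both formulas cap at $D$ and the identity still holds term-by-term. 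Once these cases are checked, the statement follows immediately from the algebra described above.
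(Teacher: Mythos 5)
Your proposal is correct and follows essentially the same route as the paper, which proves the theorem by directly combining Equation \ref{eq:acualmes}, Equation \ref{eq:you_all_free}, and Lemma \ref{lem:pruning_free}; you simply spell out the term-by-term cancellation and the edge cases that the paper leaves as ``straightforward.''
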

\begin{proof}
	This is straightforward by Equations \ref{eq:acualmes} and \ref{eq:you_all_free}, and Lemma \ref{lem:pruning_free}.
\end{proof}

It is clear that pruned nodes build very limited views  as they stop interacting with others once they are pruned. These nodes would have built broader views using the YTQ method \cite{you2017distributed}. Thus if considering applications where all nodes are required to build broader views, our pruning method is not recommended.

\subsection{Communication failure}
We also extend our pruning method to take into account communication failures during view construction. For failure management, we simply incorporate the neighbour coordination approach proposed by Sheth et al. \cite{sheth2005decentralized} into our pruning method. We found the coordination approach for failure management most suitable for our pruning method because each node can monitor the behaviour of its immediate neighbours and report failures and recoveries if detected, which suits our decentralised approach well. Details of the extended version of pruning with communication failure can be found in \cite{jmf2020}. (It should be noted that we exclude details of failure management so that we can focus on the main contribution of this work).

\section{Experimental investigation}
\label{sec:results}
For the comparison of our proposed method with the YTQ method \cite{you2017distributed} in terms of the number of messages, we consider the total and the maximum number of messages received per node. We wish to see the impact of our pruning method  on message complexity in the entire network. 
We wish to reduce the maximum number of messages per node because if the communication time per message is the bottleneck, reducing only the total number of messages may not be helpful.

Our experiments considered the following cases, in an attempt to comprehensively test the proposed
approach:
\begin{enumerate}[(1)]
	\compresslist
	\item Comparison of the number of messages received by nodes for each method on various networks. We also used a Wilcoxon signed-rank \cite{wilcoxon1992individual} test and the effect size \cite{cohen1962statistical} to verify whether the mean differences of the number of messages between pruning and the YTQ method are significantly different.
	\item Comparison of the approximated most central nodes obtained with our method to the YTQ method. A good approximation should choose a most central node with a small distance to the exact most central node. We also used a Wilcoxon signed-rank test and the effect size to verify whether the mean differences of shortest path distances between approximate central nodes obtained using the YTQ and our pruning methods with respect to the exact most central node on some random graphs are significantly different.
	
	In Section \ref{sec:description}, we showed that pruning is related to node eccentricity which allows us to ensure approximation of closeness centrality using pruning because eccentricity and closeness centralities are positively and strongly correlated for various types of graphs \cite{batool2014towards, meghanathan2015correlation}. We run simulation experiments to determine the Spearman's $\rho$ \cite{spearman1961general} and Kendall's $\tau$  \cite{kendall1938new} coefficients between eccentricity and closeness centralities.
	We consider the Spearman's $\rho$ and Kendall's $\tau$ coefficients because they are appropriate correlation coefficients to measure the correspondence between two rankings. 

\end{enumerate}
For the hypothesis tests, the significance level we use is $0.01$.
\subsection{Experimental setup}
We implemented  our pruning method
using Python and \texttt{NetworkX} \cite{hagberg2013networkx}. 
Our simulation was run on  two HPC (High Performance Computing) clusters hosted by Stellenbosch University. 
Our code can be found at \url{https://bitbucket.org/jmf-mas/codes/src/master/network}. 


We ran several simulations with random graphs (generated as discussed below), as well as some real-world networks. 
\subsubsection{Randomly generated networks}
We used a $200$x$200$ grid with integer coordinates and generated $50$ random connected undirected graphs as follows:  
We generated $N$ uniformly distributed grid locations (sampling without replacement) as nodes. The number of nodes, $N$, was sampled uniformly from $[50, 500]$. Two nodes were connected by an edge if the Euclidean distance between them was less than a specified communication range $d=8$. 
 The number of edges and the diameter for these graphs were in the intervals $[50, 2000]$ and $[20, 60]$ respectively\textemdash see Figure \ref{fig:random_property}. 
\begin{figure}[h]
	\centering
		\includegraphics[width=0.24\textwidth]{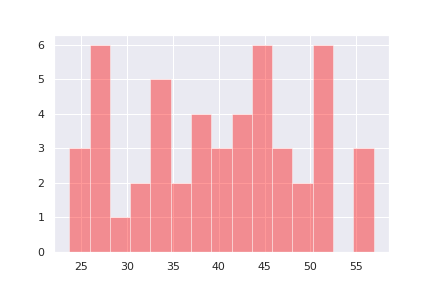}
	\caption[\textit{Histogram of diameters of random graphs.}]{\textit{\small{{Diameters of random graphs. We use a binwidth of $2$. 
	}}}}
	\label{fig:random_property}
	\vspace{-0.3cm}
\end{figure}

%
\subsubsection{Real-world networks}
The $34$ real-world graphs we consider are a phenomenology collaboration network \cite{leskovec2007graph}, a snapshot of the Gnutella peer-to-peer network \cite{leskovec2007graph}, and $32$ \textit{autonomous graphs} \cite{leskovec2005graphs}. The phenomenology collaboration network represents research collaborations between authors of scientific articles submitted to the Journal of High Energy Physics. In the Gnutella peer-to-peer network, nodes represent hosts in the Gnutella network and edges represent connections between the Gnutella hosts. Autonomous graphs are graphs composed of links between Internet routers. These graphs represent communication networks based on Border Gateway Protocol logs. Some characteristics of some of these networks are given in Table \ref{tab:autonomous_graph}. 

\subsection{Results and discussion}
\label{sec:network_results} 
\subsubsection{Average and maximum number of messages}
\begin{figure}[h]
	\centering	
	\begin{subfigure}[h*]{0.23\textwidth}
		\includegraphics[width=\textwidth]{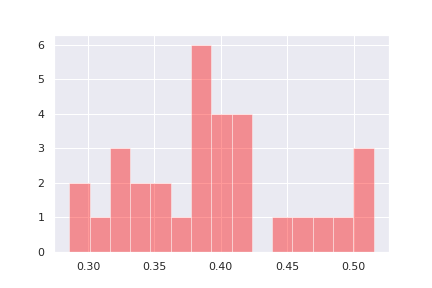}
		\caption{}
		\label{fig:avg_msg_auto_proportion}
	\end{subfigure}
	\begin{subfigure}[h*]{0.23\textwidth}
	\includegraphics[width=\textwidth]{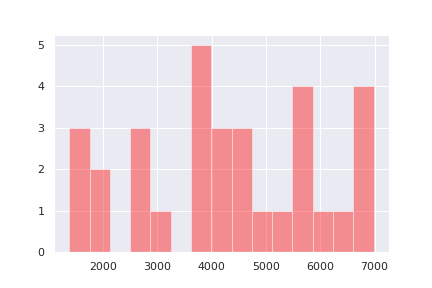}
	\caption{}
	\label{fig:max_msg_auto_difference}
\end{subfigure}
	\caption[\textit{Histograms showing the difference in the average and maximum number of messages on the $32$ autonomous graphs.}]{\textit{\small{{Differences in the number of messages between the YTQ method and our pruning method on the $32$ autonomous graphs. Positive values indicate that our pruning method outperfoms the YTQ method.  \textbf{(\ref{fig:avg_msg_auto_proportion})}: Reduction in the average number of messages.
	\textbf{(\ref{fig:max_msg_auto_difference})}: Reduction in the maximum number of messages.  }}}}
	\label{fig:avg_msg_auto}
	\vspace{-0.3cm}
\end{figure}

\begin{figure}[h]
	\centering
	\includegraphics[width=0.24\textwidth]{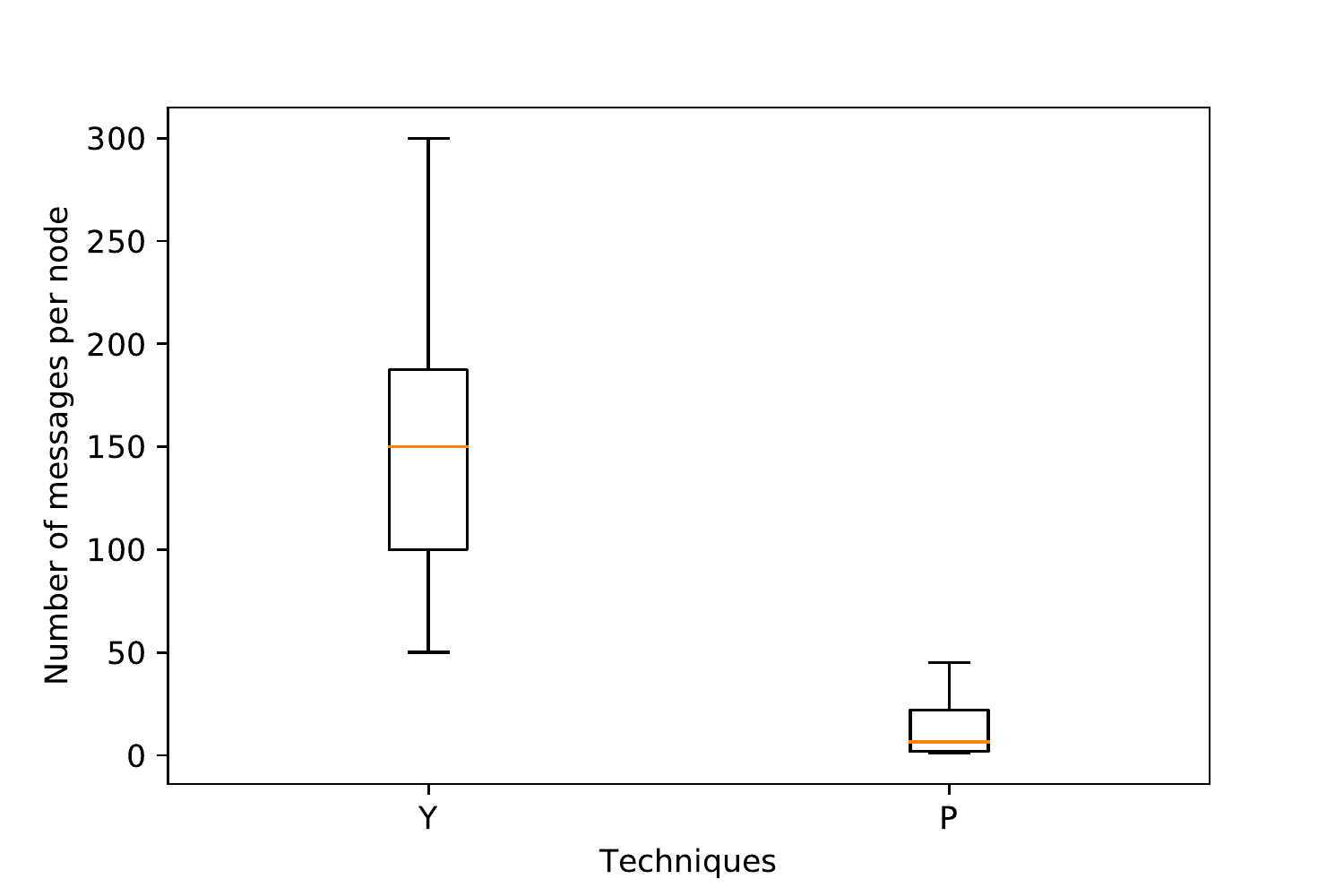}
	\caption[\textit{Number of messages using YTQ and pruning methods on one random graph.}]{\textit{\small{{Numbers of messages per node using each method on a random graph with $D=10$. On the horizontal axis Y and P denote the YTQ and pruning methods respectively.}}}}
	\label{fig:approximate}
	\vspace{-0.3cm}
\end{figure}
Our experiments illustrate the improved communication performance over the YTQ method in \cite{you2017distributed} resulting from pruning. Figure \ref{fig:avg_msg_auto} shows differences in the averages and in the maximum number of messages per node between the YTQ and our pruning methods on the $32$ autonomous graphs. 
We see that the approach reduced communication by $30-50\%$ on average for all network, with over $75\%$ of networks reducing their maximum number of messages by $30\%$ or more.

\begin{table}[h]
	\centering
	\scriptsize
		\scalebox{0.7}{\begin{tabular}{ |l|l|l||l|l||l|l| }
		\hline
		\textbf{Nodes}&\textbf{Edges}&\textbf{Diameter}&\textbf{Y}  & \textbf{P}&\textbf{Ymax}  & \textbf{Pmax}\\
		\hline
		\hline
		\multicolumn{7}{|l|}{\textbf{Three autonomous networks}}\\
		\hline
		$1486$&$3422$&$9$&\textit{28.0}&$ 8.8$&\textit{7005}&$ 1413$\\
		\hline
		$2092$&$4653$&$9$&\textit{27.2}&$ 7.9$&\textit{3312}&$ 552$\\
		\hline
		$6232$&$13460$&$9$&\textit{25.4}&$ 6.9$&\textit{7295}&$ 1459$\\
		\hline
		\multicolumn{7}{|l|}{\textbf{Phenomenology collaboration network}}\\
		\hline
		$10876$&$39994$&$9$&\textit{52.2}& $ 14.3$&\textit{721}&$ 120$\\
		\hline
		\multicolumn{7}{|l|}{\textbf{Gnutella peer-to-peer network}}\\
		\hline
		$9877$&$25998$&$13$&\textit{63.0}&$ 8.3$&\textit{3705}&$ 787$\\
		\hline
	\end{tabular}}
	\caption[\textit{Results for a sample of $5$ real-world networks ($1$ phenomenology collaboration network, $1$ Gnutella peer-to-peer network and $3$ autonomous networks) on the YTQ method and our proposed pruning method.}]{\textit{Graph properties and the number of messages with each approach for five real-world networks (the phenomenology collaboration network, the Gnutella peer-to-peer network, and three autonomous networks). Y(Ymax) and P(Pmax) denote average(maximum) number of messages for the YTQ and pruning methods respectively. 
		}}
	\label{tab:autonomous_graph}
\end{table}

Table \ref{tab:autonomous_graph} shows the average  and the maximum number of messages per node for five real-world networks respectively.  
The number of messages per node for each technique on one random network are contrasted in Figure \ref{fig:approximate}. 
%
The results confirm that the  pruning method is better than the YTQ method (Figure \ref{fig:avg_msg_auto}).  

\subsubsection*{Hypothesis test}

We observed a $p$-value of $7.96\times 10^{-90}$ and an effect size of $21.1140$  between the number of messages obtained with our pruning and the YTQ methods on $50$ random graphs containing $500$ nodes each. The $p$-value is less than the threshold $0.01$, so the means of the number of messages using the YTQ method against pruning are significantly different. In terms of effect size, according to the classification in Gail and Richard \cite{effectsize}, the effect size between our pruning and the YTQ methods is large ($e\geq 0.8$). So the means of the number of messages using the YTQ method against pruning differ markedly.


We also observed a reduction in total running time  and memory usage from our approach. So no adverse effect on power usage from the approach.

\subsubsection{Quality of selected most central node}
First, for various graphs considered here, the averages and standard deviations of the Spearman's $\rho$ and Kendall's $\tau$ coefficients between eccentricity and closeness centralities are $0.9237\pm0.0508$ and $0.7839\pm0.0728$ respectively, and the correlation coefficients were all positive. This shows that there is predominantly a fairly strong level of correlation between eccentricity and closeness centralities for various graphs considered here. This confirms the results by Batool and Niazi \cite{batool2014towards}, and Meghanathan \cite{meghanathan2015correlation}. 
Note that we chose real-world graphs and random graphs modelled on what might be realistic for a sensor network\textemdash so no attempt to choose graphs from models with high correlation.

Tables \ref{tab:shortest_path_1} shows the shortest path distances between the exact most central node and approximated most central nodes using our pruning method and the YTQ method \cite{you2017distributed} for two random graphs. 
Figure \ref{fig:closeness} shows differences of shortest path distances between the exact most central node and approximated central nodes obtained with the YTQ and our pruning methods on two random graphs. 
For each of the graphs, we randomly vary $D$ in a range of values smaller than the diameter of the graph. 

\begin{table}[h]
	\centering
	\scriptsize
	\scalebox{0.9}{	\begin{tabular}{ |l|c|c||c|c| }
			\hline
\textbf{$D$}&\textbf{Y$_1$}&\textbf{P$_1$}&\textbf{Y$_2$}&\textbf{P$_2$}\\
\hline
$2$ &5&5&14&14\\
\hline
$6$ &$14$&10&$19$&5\\
\hline
$10$ &$13$&1&$20$&0\\
\hline
$14$ &$14$&4&$19$&2\\
\hline
$18$ &$8$&0&0&0\\
\hline
$22$ &0&0&0&$2$\\
\hline
$26$ &0&0&0&0\\
\hline
	\end{tabular}}
	\caption[\textit{Shortest path distances between the exact most central node and approximated most central node for failure-free situations.}]{\textit{Shortest path distances between the exact most central node and approximated most central node using pruning and the YTQ method. We use two random graphs, one with $70$ nodes and diameter of $35$, and another with $72$ nodes and diameter of $32$.  One method achieves better approximations than another if the distance of the selected node from the true most central node is smaller. Y$_i$ and P$_i$ indicate shortest path distances for the YTQ and pruning methods on the $i$-th random graph respectively.}}
	\label{tab:shortest_path_1}
\end{table}

\begin{figure}[h]
	\centering
	\begin{subfigure}[h*]{0.23\textwidth}
		\includegraphics[width=\textwidth]{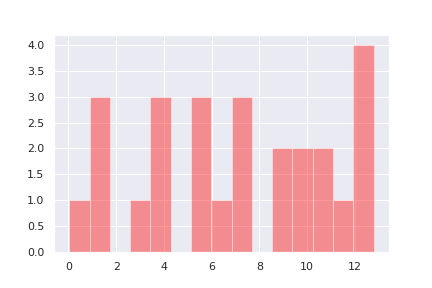}
		\caption{}
		\label{fig:closeness_0}
	\end{subfigure}
	\begin{subfigure}[h*]{0.23\textwidth}
		\includegraphics[width=\textwidth]{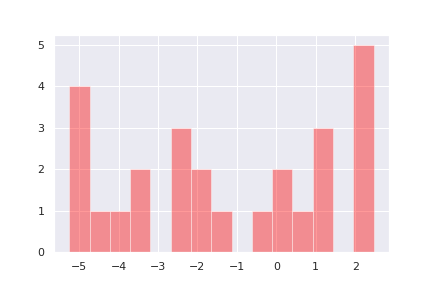}
		\caption{}
		\label{fig:closeness_1}
	\end{subfigure}
	\caption[\textit{Histograms of differences of shortest path distances between approximated central nodes obtained using the YTQ and our pruning methods with respect to the exact most central node on some random graphs for failure-free cases.}]{\textit{\small{{Differences of shortest path distances between approximated central nodes obtained using the YTQ and our pruning methods with respect to the exact most central node on two random graphs: \textbf{(\ref{fig:closeness_0})}: the first (diameter of $26$, $125$ nodes and $180$ edges) and \textbf{(\ref{fig:closeness_1})}: second (diameter of $36$, $289$ nodes and $597$ edges)random graphs. Positive values indicate that our method is better than the YTQ method.}}}}
	\label{fig:closeness}
	\vspace{-0.3cm}
\end{figure}
The evaluation of node centrality based on a limited view of the communication graph has an impact on the choice of the most central node.
When using our pruning and the YTQ methods to choose a leader based on closeness centrality, the methods can yield different results under the same conditions. We found that (Table \ref{tab:shortest_path_1} and Figure \ref{fig:closeness}) our pruning method generally gave better approximations to closeness centrality than the YTQ method when $D$ is considerably smaller than the diameter, with the results of the YTQ method improving as $D$ increases. This supports our claim that our pruning method effectively identifies nodes which should not be chosen as leaders  as they are highly unlikely to have the highest closeness centrality. 
Note that, even though the two methods sometimes give the exact most central nodes for some $D$ (for example for $D=26$ in Table \ref{tab:shortest_path_1}), these exact most central nodes are not guaranteed. 


The reason why the YTQ method yields poor results when $D$ is smaller than the diameter of the graph is as follows. When some of the nodes have different views of the communication graph and each evaluates its closeness centrality based only on its own view, a node with small but unknown exact closeness centrality may have a high estimated closeness centrality. This can lead to poor conclusions. 
In the YTQ method, the central node is selected from all nodes.
The advantage of the pruning method is that only unpruned nodes compute their approximate closeness centralities, i.e. the many nodes that are pruned are no longer candidates for central nodes.  This reduces the chance of yielding poor performance as the central node is selected from a shorter list of candidates, i.e. the unpruned nodes.

\subsubsection*{Hypothesis test}

We observed a $p$-value of $0.1197$ between the results obtained with our pruning and the YTQ methods on $50$ random graphs of $500$ nodes each. The $p$-value is greater than the threshold $0.01$, so there is no significant difference between the means for the two approaches.  This means that the qualities of the selected most central nodes using both methods are almost the same. This is beneficial to pruning\textemdash though they both provide almost the same qualities of selected most central nodes, pruning reduces the number of messages significantly compared to the YTQ method \cite{you2017distributed}.
\section{Conclusion}
\label{sec:conclusion}
We proposed an enhancement to a benchmark method \cite{you2017distributed} for view construction. The main motivation of this enhancement was to reduce the amount of communication: we aim to reduce the number of messages exchanged between nodes during interaction. Given a network, some nodes can be identified early as being unlikely to be central nodes. Our main contribution was noting that we can identify such nodes and reduce communication by pruning them. 

Our proposed method improves the benchmark method in terms of number of messages. 
We found that reduction of the number of messages has a positive impact on running time and memory usage \cite{jmf2020}.

\textbf{Future work.}
	Our  message counting model ignores the fact that in large networks, messages comprise multiple packets. Analysis of the savings of our approach in terms of the actual amount of data communicated could be investigated in future. Further, it may be possible to identify further types of prunable nodes and consider richer classes of graphs for assessment.

\balance
\bibliography{references}

\end{document}